\documentclass[manuscript]{acmart} % review, anonymous

\usepackage{amsmath}
\usepackage{amsfonts}
\usepackage{bbold}

\setcopyright{acmcopyright}
\copyrightyear{2023}
\acmYear{2023}
\acmDOI{TBD}
% \acmDOI{XXXXXXX.XXXXXXX}
\acmConference[RecSys 2023]{17th ACM Conference on Recommender Systems}{Sept 18--22,
  2023}{Singapore}

\begin{document}

\title{Of Spiky SVDs and Music Recommendation}

\author{Darius Afchar}
\email{research@deezer.com}
% \orcid{1234-5678-9012}
\author{Romain Hennequin}
\affiliation{%
  \institution{Deezer Research}
  \city{Paris}
  \country{France}
}
\author{Vincent Guigue}
\affiliation{%
  \institution{AgroParisTech, MLIA -- Sorbonne Universités}
  \country{France}
}

\begin{abstract}
The truncated singular value decomposition is a widely used methodology in music recommendation for direct similar-item retrieval and downstream tasks embedding musical items.
This paper investigates a curious effect that we show naturally occurring on many recommendation datasets: spiking formations in the embedding space. 
We first propose a metric to quantify this spiking organization's strength, then mathematically prove its origin tied to underlying communities of items of varying internal popularity.
With this new-found theoretical understanding, we finally open the topic with an industrial use case of estimating how music embeddings' top-k similar items will change over time under the addition of data.
\end{abstract}

\maketitle

\section{Introduction}

%<SVD is important to study>
There is no unique definition of music recommendation, but rather a range of tasks that fall under this name: track sequence recommendation, context-aware recommendation, playlist continuation or generation, similar track, artist, or playlist retrieval \cite{schedl2018current}. These settings represent the many use cases of recommendation found in the wild (\textit{e.g.,} in a streaming service).
Despite this proteiformity, many recommenders leverage a model of similar item retrieval as a basis for their computation. A standard methodology is to rely on some high-dimensional representations of the musical items to recommend.
That is why we see more and more work aiming to find general-purpose representations of music tracks (\textit{e.g.,} self-supervision on music spectrograms \cite{wang2022towards}) to be used in various downstream recommendation scenarios. 
In the realm of collaborative filtering, the literature is dominated by matrix factorization techniques to create such representations.
Compared to other research fields of machine learning, it has been reported in recent years that modern approaches based on neural networks did not bring significant benefit over simpler baselines of factorization and also suffered many reproducibility issues \cite{rendle2019baseline,ferrari2019we,rendle2020neural,ferrari2021troubling}.
In this context, the truncated Singular Value Decomposition (SVD) is still a strong contender for factorization. It has demonstrated consistent performances across a variety of datasets \cite{cremonesi2010topn}. SVD also benefits from rich theoretical literature and has been implemented in efficient ways suitable to an industrial scale (\textit{e.g.,} distributed RSVD \cite{halko2011finding, constantine2011tall, briand2021semi}). That is why SVD is still widely used in music recommendation.

Nevertheless, despite not requiring much hyperparameter tuning, some blind spots still exist in using SVD as embeddings \cite{levy2015improving}.
% <désactivable pour gagner de la place>
For instance, after approximating a dataset $M$ with the low-rank decomposition $U\Sigma V^*$, should we use $U$, $U\Sigma$, or $U\Sigma^p$ as a basis of similarity? Should the dot product or cosine similarity be used?
% </>
Many practices for item retrieval rely on experimental results but have yet to find theoretical justifications.
% VERSION 1
In this paper, we submit the reader with a curious geometry that tends to emerge when applying SVD to multiple recommendation datasets: \textbf{spikes}. We observe that embedding vectors tend to self-organize along lines that pass through the origin (see Figure \ref{fig:overview}). This effect is understudied and goes unnoticed when using the cosine distance for similarity since the normalization squashes all points on a hypersphere \cite{levy2015improving}. However, we show that leveraging the norm of vectors may be more insightful than it first appears and that they should not be discarded.
In detail, we prove that spikes represent communities, and embeddings' norms stem from their varying importance within that community (\textit{i.e.,} intra-popularity). These two latter features are notably prevalent in music data \cite{cross2001music, jacobson2009musically} and substantiate the need for theoretical support in this context. As an opening to our results, we show that the norm is strongly indicative of the stability of embeddings, which has industrial implications on how music track representations evolve through time.
To our knowledge, this spiking behavior has yet to be leveraged for recommendation before.

% VERSION 2
%In detail, we found in the literature on community detection that spikes happen in the sufficient case that a particular class of graph of communities generates the data. In this paper, our contribution is to demonstrate this effect to be general (\textit{i.e.,} a necessary condition). For any embeddings that are partly structured in spikes, we prove that the underlying graph generating the data follows said class of graph. This enables us to deduce that spikes represent communities, and their norm stem from their varying importance within that community. These two latter features are notably prevalent in music data \cite{cross2001music, jacobson2009musically} and substantiate the need for theoretical support in this context. As an opening to our results, we show that the norm is strongly indicative of the stability of embeddings, which has industrial implications on how music track representations evolve through time.

The paper proceeds as follows: we first propose a unified metric to quantify this spiking effect, then we mathematically prove the equivalence of spiking geometry with the presence of communities and intra-popularity within each community. We discuss and open our formalization with a practical case of evaluating the stability of music embeddings.

All experiments may be reproduced via our code repository at \url{https://github.com/deezer/spiky_svd}.
%{\color{gray} We will publish our \texttt{<anonymized>} dataset on Zenodo after the reviewing process.}

\begin{figure}[h]
  \centering
  \includegraphics[width=0.9\linewidth]{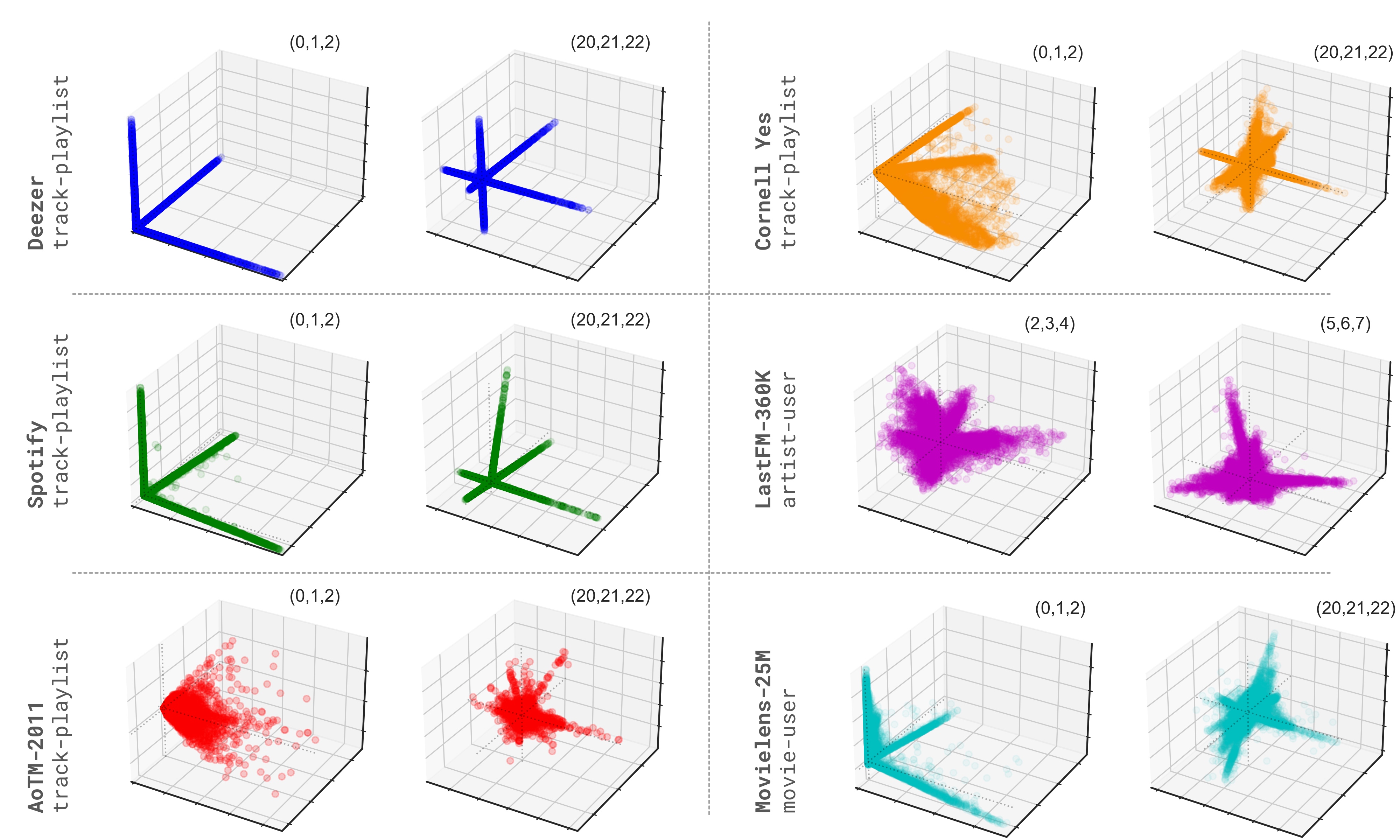}
  \caption{Naturally emerging "spikes" formations across several independently computed SVDs. Since the embedding vectors are high-dimensional, we can only display some projections of the points on slices of dimensions (indicated above).}
  \Description{Four subplots}
  \label{fig:overview}
  \vspace{-2em}
\end{figure}

\section{Geometry of the spikes}

We present and compute SVDs on several datasets and show the emergence of spike formations. We then propose establishing a measure of the "spikiness" of a distribution of embeddings, to quantify the strength of this effect.

\subsection{Background on the SVD}
\label{sec:background_svd}

We first provide the reader with a brief theoretical background. The \textit{Truncated Singular Value Decomposition} (SVD\footnote{A full-rank decomposition is, of course, intractable to use since it would create embeddings in millions of dimensions. So we abuse notation and drop the T in the acronym. As a reference, the truncated method can also be found under the designation \textit{pureSVD} in the recommenders community.}) is a factorization technique that aims to approximate a matrix $M \in \mathbb{R}^{n \times m}$ as the product of three matrices $U \in \mathbb{R}^{n \times f}$, $\Sigma \in \textrm{diag}(\mathbb{R}^f_+)$, and $V \in \mathbb{R}^{m \times f}$, such that $ \| M - \hat M \|_F^2 $ is minimal, with $\hat M = U \Sigma V^*$ and $U$ and $V$ are orthonormal by columns (\textit{i.e.,} $U^*U = V^*V = I_f$). $\hat M$ is a matrix of rank $f$, with $f$ chosen such that $f \ll \min(n,m)$. $\hat M$ is thus said to be a low-rank approximation of $M$. The "truncated" term refers to the optimal solution for $\hat M$ where only the $f$ largest singular values and eigenvectors of $M$ are retained for the decomposition (Eckart-Young-Mirsky theorem \cite{Eckart1936}).

In the context of music collaborative filtering, several remarks can be made. $M$ is often a large sparse matrix ($n$ and $m$ are frequently counted in millions) and often either denotes track-user, artist-user, or track-playlist interactions. $M$ is often positive, and sometimes additionally binary (\textit{e.g.,} with the formalism of implicit feedbacks \cite{rendle2009bpr}). Since the goal is often to leverage $\hat M$ for similar item retrieval rather than rebuilding $M$, the term $V$ is often ignored, and only the similarities between each row of $U$ or $U \Sigma$ are considered. The close-form solution to the SVD is such that $U$ is the $f$ dominant eigenvectors of $MM^*$. Therefore, it does not matter whether we compute the SVD on $M$ or $MM^*$ to retrieve $U$.
With an abuse of notations, we identify $M$ to its symmetrized version $MM^*$ since it does not impact the resulting $U$.

\subsection{Datasets}
\label{sec:datasets}

In order to demonstrate that spiking SVDs are not tied to a particular setting, we use six recommendation datasets: Spotify Million Playlist \cite{spotifyDataset} (Spotify), The Art of the Mix \cite{aotmDataset} (AoTM-2011), Cornell's playlists \cite{cornellDataset}, LastFM-360K \cite{lastfmDataset}. We also include the classic MovieLens-25M dataset \cite{movielensDataset} to hint at an extension of our result to other settings, and a private dataset from \textit{Deezer} containing twelve SVDs computed over one year in an industrial context.
This latter dataset will be further exploited in section \ref{sec:opening} to evaluate the stability of embedding representations.

We compute SVDs using the same setting for all six datasets. We use the symmetrized version $MM^*$ of the matrix. To curb popularity biases, denoise, and sparsify the matrix, we filter the top-$k$ highest non-null interactions per item. As commonly done in information retrieval, we apply a \textit{positive pointwise mutual information} (PPMI) normalization to the resulting matrix, which further helps with popularity biases and retrieval performances \cite{levy2015improving} and has been shown to be equivalent to modern skip-gram formulations with negative sampling \cite{levy2014neural}, which is common for music embeddings \cite{cheng2017exploiting, caselles2018word2vec, briand2021semi}. We choose $f = 128$ for the decomposition and take $E = U\Sigma$ as the resulting embeddings\footnote{Any other choice of normalization as $U\Sigma^p$ with $p \in [0, 1]$ simply scales the embedding space differently but does not impact the presence of spikes.}. The rows of $E = (e_1, ... e_n)$ provide a latent representation for each item.
%Note that any decomposition with $f' < f$ can be obtained as the first $f'$ columns of the embeddings.
%All details and possible particularities to each dataset may be found in our code repository.

The obtained SVDs are displayed in Figure \ref{fig:overview}. Since we cannot lay out all 128 dimensions in a single figure, we report curious readers to our repository for a detailed visualization. As a general comment, it seems that the spikiness of embeddings is stronger on dominant dimensions (\textit{i.e.,} eigenvectors associated with the higher singular values and hence the first indexes) and tends to degrade to a blurry cloud of points in the higher indexes. This seems to align with the literature on robust SVD \cite{stewart1998perturbation} that dominant components first reconstruct underlying structures and then reconstruct noisy perturbations.
The \textit{Deezer} and \textit{Spotify} datasets exhibit the cleanest spikes overall.

\subsection{Spikiness metric}
\label{sec:spikiness}

We could not find an off-the-shelf definition of spikes of points that worked in high-dimension\footnote{The closest applicable notion is probably the measure of \textit{kurtosis}. We have tried several experiments with this statistic. However, it turned out to be very sensitive to hyper-parameter choice and does not have a straightforward extension to multi-dimensional variables.}. In our context, it seems suited to study the embeddings with the highest norms as many candidate peaks for spikes. We can then measure whether many other points with a lower norm are collinear to them\footnote{Note that the spikes all seem to pass through the origin.}. If a small number of peaks are collinear to most of the rest of the distribution, it means that the distribution is spiky overall.

Since the distributions are noisy, we must instead rely on approximate collinearity. We set some thresholds: $\theta$ and $\rho$. We then denote by $e^*$ a given embedding with a high norm and say that any other vector $e$ belongs to the spike if $\cos(e^*, e) > \cos(\theta)$. We iterate for candidate peaks $e^*$ in descending order of norm until we have captured a ratio $\rho$ of points in the distribution. This can be seen as a greedy heuristic to the partial set cover problem\footnote{... which is notoriously NP-complete in high dimension and thus could not be solved exactly with our number of considered points.}.
The obtained number of spikes is divided by $n$ -- the total number of embeddings -- and denoted \texttt{Spk}. We have $0 < \texttt{Spk} \leq \rho$, where the lower bound occurs when all vectors form a single spike, and the upper bound means that no two points are collinear.

The results are given in Table \ref{tab:spikiness} for $\cos \theta = 0.9$ and $\rho = 50\%$. Though this measure is computed with a heuristic and constitutes an upper bound to possibly more optimal choices of peaks to cover the distribution, the obtained results seem consistent with our intuitions on the spikiness of each SVD.
The \textit{Deezer} and \textit{Spotify} come up as the most spiky. To fully spell out their obtained result, half of the millions of embedding vectors' directions -- that could be arbitrarily anything in 128 dimensions -- fall under one of 361 and 1395 spikes.
We also confirm that the embeddings are spikier in their first dimensions and then exhibit a sudden increase of \texttt{Spk} when the noise dominates in later dimensions.

\begin{table}
  \caption{Estimated \texttt{Spk} on several datasets: needed ratio of points that are peaks of spikes to capture $\rho = 50\%$ of the total distribution of $n$ embeddings, with $cos \theta = 0.9$. The parameter
  @$f$ denotes several choices of truncation for the SVD. We add a Gaussian sampling as a reference on the improbability of finding collinear vectors in $\mathbb{R}^f$ -- even with a $\theta$ angle tolerance.}
  \label{tab:spikiness}
  \begin{tabular}{c|cccccc|c}
    \toprule
    Dataset&Deezer&Spotify&AoTM-2011&Cornell&LastFM-360K&Movielens-25M&$\mathcal{N}(0,1)$\\
    \midrule
     \texttt{Spk}@128 & 0.04\%   & 0.14\% & 18.1\% & 4.55\% & 2.42\% & 7.09\% & 50\% {\color{gray}$= \rho$}\\
     \texttt{Spk}@64 & 150/$n$ {\color{gray}< 0.01\%} & 228/$n$ {\color{gray}< 0.03\%} & 11.4\% & 1.59\% & 0.49\% & 3.05\% & 50\% \\
     \texttt{Spk}@32 & 50/$n$ {\color{gray}< 0.01\%} & 55/$n$ {\color{gray}< 0.01\%} & 2.31\% & 0.49\% & 0.14\% & 1.12\% & 50\% \\
     \texttt{Spk}@16 & 23/$n$ {\color{gray}< 0.01\%} & 23/$n$ {\color{gray}< 0.01\%} & 0.93\% & 0.09\% & 0.06\% & 0.02\% & 49.4\% \\ % TODO: delete résultats à 16
  \bottomrule
\end{tabular}
\end{table}

\vspace{1em}
We have shown a systematic spiking effect happening over multiple computed SVDs. The result that a significantly low number of spikes may cluster half of the music tracks is highly reminiscent of \textit{spectral clustering} (but with spikes) and hints at the presence of structured groups in the data.

\section{Modeling}

We next demonstrate a formal equivalence between the spiking structure in the SVD-embedding space and the presence of communities of varying node degrees in the graph associated with $\hat M$. This enables us to get network properties from the embeddings without explicitly building a graph, for which we show a practical impact for recommendations.

\subsection{Preliminaries}

We define some notations and outline some basic properties to be used in the next section.

\subsubsection{Spikes in $M$}
\label{sec:notations_theorem}

As argued in section \ref{sec:background_svd}, we may assume $M$ to be symmetric since we only seek to retrieve similarities from the "left side" of $M$. We denote by $\textbb{S}^n(\mathbb{R}_+)$ the space of symmetric matrices of $\mathbb{R}^{n \times n}_+$. Applying the spectral theorem, $M$ is diagonalizable as $PDP^*$, with $P$ an orthonormal basis of eigenvectors. We rank the eigenvalues $(d_i)$ and corresponding vectors in decreasing order of absolute value. We then have two useful lemmas:

\begin{lemma}
\label{lemma:svd_diag}
Matrix $M$ has a truncated SVD given by $\hat M = U\Sigma V^*$ such that:
$$U = P_{1 .. f} \quad\quad \Sigma = |D_{1 .. f, 1 .. f}| = \textrm{diag}(\sigma_1, ... \sigma_f) \quad\quad V = P_{1 .. f} \textrm{sign}(D_{1 .. f, 1 .. f})$$
where $|.|$ denote the term-by-term absolute value and $\textrm{sign}()$ the sign function, since the SVD requires $\Sigma$ to be positive.
\end{lemma}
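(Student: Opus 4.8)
The plan is to produce an explicit singular value decomposition of the \emph{full} matrix $M$ directly from its spectral decomposition, and then read off the truncation using the Eckart--Young--Mirsky theorem \cite{Eckart1936}. Write $D = \textrm{diag}(d_1,\dots,d_n)$ with $|d_1|\ge\cdots\ge|d_n|$ and $P=(p_1,\dots,p_n)$ the associated orthonormal eigenbasis, so $M=\sum_{i=1}^n d_i\, p_i p_i^*$. The one algebraic idea needed is that $d_i = |d_i|\,\textrm{sign}(d_i)$, hence $d_i\,p_i p_i^* = |d_i|\; p_i\,\bigl(\textrm{sign}(d_i)\,p_i\bigr)^*$. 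Setting $\sigma_i = |d_i|$, $u_i = p_i$, $v_i = \textrm{sign}(d_i)\,p_i$ therefore gives $M = \sum_{i=1}^n \sigma_i\, u_i v_i^*$, and it only remains to check this is a valid SVD.

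The verification is routine: the $\sigma_i$ are nonnegative and already sorted in decreasing order; the family $(u_i)$ is orthonormal because it is literally $P$; and $(v_i)$ is orthonormal because scaling each column of an orthonormal family by $\pm 1$ preserves orthonormality — concretely, writing $V = P\,\textrm{sign}(D)$ one has $V^*V = \textrm{sign}(D)\,P^*P\,\textrm{sign}(D) = \textrm{sign}(D)^2 = I$ on the span of the nonzero eigenvalues. Truncating to the $f$ largest singular values then yields $\hat M := \sum_{i=1}^f \sigma_i u_i v_i^* = U\Sigma V^*$ with $U = P_{1..f}$, $\Sigma = |D_{1..f,1..f}|$ and $V = P_{1..f}\,\textrm{sign}(D_{1..f,1..f})$, exactly the claimed formulas; by Eckart--Young--Mirsky this matrix minimizes $\|M-\hat M\|_F^2$ over all rank-$f$ matrices, so it is a truncated SVD of $M$.

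The ``hard'' part here is conceptual rather than computational: since singular values are unsigned, one must keep careful track of where the signs of the eigenvalues end up, and observe that they can be absorbed \emph{entirely} into $V$ while leaving $U$ equal to the eigenbasis — which is precisely the point the rest of the paper exploits, namely that the left factor $U$ (the object used for retrieval) is insensitive to the signs in $D$. The only genuine edge cases are zero eigenvalues among the top $f$ (for which the corresponding $v_i$ can be replaced by any orthonormal completion, harmless since those terms vanish in $\hat M$) and ties in $|d_i|$ at the truncation boundary; these affect only the well-known non-uniqueness of the SVD, not the validity of the stated decomposition, so the lemma is best read as exhibiting \emph{a} truncated SVD in the prescribed form.
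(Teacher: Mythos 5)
Your proof is correct and follows the same route the paper intends: exhibit the decomposition explicitly from the spectral factorization, absorb the eigenvalue signs into $V$, verify orthonormality, and invoke Eckart--Young--Mirsky for the truncation. The paper's own proof merely says ``we identify a desired decomposition'' and notes the non-uniqueness; you have simply written out that verification in full, including the harmless edge cases (zero eigenvalues, ties at the truncation boundary) that the paper leaves implicit.
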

\begin{proof}
We identify a desired decomposition. 
Note that the unitary eigenvectors of $P$ are not uniquely defined, and thus, so does the decomposition.
%Note that unitary eigenvectors in $P$ are defined up to multiplication by -1 and thus do not lead to a unique decomposition.
\end{proof}

\begin{definition}[Spike formation]
A set of vectors $(e_1, ... e_k)$ is said to form a spike -- as studied in \ref{sec:datasets} -- if we may rewrite them as $(\alpha_1 \vec{s}, ... \alpha_k \vec{s})$ with $\vec s$ a unitary vector and $(\alpha_1, ... \alpha_k) \in \mathbb{R}_+^k$ their distance from the origin.
\end{definition}

\begin{lemma}
\label{lemma:spikes}
Let $M$ have a decomposition $\hat M = U\Sigma V^*$.
If the rows $(u_{i_1}, ... u_{i_k})$ of $U$ belong to a spike, then, we have that the rows $(i_1, ... i_k)$ of $U\Sigma^p$ and of $V$ also form a spike:
$$ u_i \Sigma^p = \alpha_i \left( \vec{s} \cdot \Sigma^p \right)  = \alpha_i \vec{s}' \quad \quad \quad v_i = u_i \cdot \textrm{sign}(D_{..f,..f}) = \alpha_i  \left(  \vec{s} \cdot \textrm{sign}(D_{..f,..f}) \right) = \alpha_i \vec{s}''$$
\end{lemma}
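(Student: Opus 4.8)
The plan is a direct computation that exploits the fact that right-multiplication by $\Sigma^p$ and by $\textrm{sign}(D_{..f,..f})$ are \emph{fixed} linear maps on row vectors, so they send a common direction shared by several rows to another common direction. I would start by unfolding the hypothesis: by the spike definition there is a unit vector $\vec s \in \mathbb{R}^f$ and scalars $\alpha_1,\dots,\alpha_k \in \mathbb{R}_+$ with $u_{i_j} = \alpha_j \vec s$ for every $j$.

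For the $U\Sigma^p$ part I would compute $u_{i_j}\,\Sigma^p = \alpha_j(\vec s\,\Sigma^p)$ and observe that $\vec w := \vec s\,\Sigma^p$ does not depend on $j$. Since the truncation keeps the $f$ largest singular values, which we may assume nonzero (we take $f$ no larger than $\mathrm{rank}\,M$), $\Sigma = \textrm{diag}(\sigma_1,\dots,\sigma_f)$ has strictly positive diagonal, hence $\Sigma^p$ is well defined and invertible for every real $p$ (and equals $I_f$ when $p=0$); in particular $\vec w \neq 0$. Setting $\vec s' := \vec w/\lVert \vec w\rVert$ (a unit vector) and $\alpha_j' := \alpha_j \lVert \vec w\rVert \in \mathbb{R}_+$ gives $u_{i_j}\,\Sigma^p = \alpha_j'\,\vec s'$, which is a spike in the sense of the definition; the identity displayed in the statement is exactly this equality, written before renormalizing the direction.

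For the $V$ part I would invoke Lemma~\ref{lemma:svd_diag}, which yields $V = P_{1..f}\,\textrm{sign}(D_{1..f,1..f}) = U\,S$ with $S := \textrm{sign}(D_{..f,..f})$ a diagonal $\pm 1$ matrix (again using that the retained eigenvalues are nonzero). Hence the $i$-th row of $V$ is $v_i = u_i\,S$, and for $i = i_j$ we get $v_{i_j} = \alpha_j(\vec s\,S)$. Because $S$ is orthogonal, $\vec s'' := \vec s\,S$ is again a unit vector, so $v_{i_j} = \alpha_j\,\vec s''$ is a spike with the \emph{same} coefficients $\alpha_j$, with no renormalization required.

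There is no genuinely hard step here; the only points requiring care are (i) invertibility of $\Sigma^p$, needed so that the image direction $\vec s\,\Sigma^p$ is nonzero and the spike does not degenerate, and (ii) the mild mismatch with the \emph{formal} spike definition for $U\Sigma^p$, whose direction $\vec s\,\Sigma^p$ is not unit in general --- but the fix only multiplies every $\alpha_j$ by the common positive constant $\lVert \vec s\,\Sigma^p\rVert$, which preserves both non-negativity of the coefficients and the property of sharing a single direction. For $V$ this subtlety disappears entirely since $\textrm{sign}(D_{..f,..f})$ is norm-preserving.
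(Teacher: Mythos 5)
Your proposal is correct and follows the same route as the paper, which simply presents the two displayed identities as the proof: right-multiplication by the fixed matrices $\Sigma^p$ and $\textrm{sign}(D_{..f,..f})$ sends the common direction $\vec s$ to a common direction, preserving the spike. Your added care about renormalizing $\vec s\,\Sigma^p$ to a unit vector (and noting that no renormalization is needed for $V$ since the sign matrix is orthogonal) tidies up a detail the paper glosses over, but it is the same argument.
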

The spiking effect is insensitive to any choice of embedding as $U \Sigma^p$ or $V$: only the directions of the spikes change. In particular, the vector $\vec s''$ of $V$ remains unitary.
In the rest of this section, we arbitrarily fix the embeddings as $U \Sigma$.

\subsubsection{Community graph models} The Stochastic Block Model (SBM) is an important model introduced in \cite{holland1983stochastic} as a means to generate random graphs according to an underlying structure in communities. Given a set of $n$ nodes indexed as $\llbracket 1, n \rrbracket = \{1, 2, ... n \}$, a partition $ C: \llbracket 1, n \rrbracket \rightarrow \llbracket 1, K \rrbracket$ of the nodes into $K$ communities, and a matrix $B \in \mathcal{S}^K([0,1])$ of edge probabilities, the SBM model samples an adjacency matrix $A \in \mathcal{S}^n(\{0,1\})$ such that $A_{i,j} = A_{j,i} \sim \mathcal{B}(B_{C(i), C(j)})$, with $\mathcal{B}$ the Bernouilli law. On average, we thus obtain a matrix with a constant edge degree within and between any pair of communities of nodes.
If there is not a unique way of defining communities \cite{fortunato2010community, Lehmann2014}, it is usually agreed that the edge density within each community must be superior to that with other ones \cite{radicchi2004defining}, which translates as $B_{i,i} > B_{i, j}$ for all $i \neq j$. This last property is known as \textit{assortivity}. Music data has been shown to be particularly assortive \cite{jacobson2009musically}.

The \textit{Degree-Corrected SBM} (DCBM) \cite{karrer2011stochastic} introduces an additional set of parameters $(\alpha_i)_1^n \in [0, 1]^n$ allowing the edge probability to vary inside communities. Concisely, the DCBM proposes to sample edges between $i$ and $j$ with a probability $\alpha_i \alpha_j B_{C(i), C(j)}$. An example is depicted in Figure \ref{fig:dcbm}. In order to make the parameters identifiable, we may renormalize $B$ to such that $B_{i,i} = 1$. For further properties on the DCBM, we refer readers to \cite{karrer2011stochastic, lei2015consistency}.

% \subsubsection{Graph interpretation}

% Considering $M$ as the adjacency matrix of a graph $\mathcal{G} = (\mathcal{M}, \mathcal{E})$, we have that each node $m_i$ and $m_j$ are connected by an edge of weight $M_{i,j}$ (if not null). We also have by definition that the degrees and weighted degree of each node are:
% $$\textrm{deg}(m_i) = \sum_j \delta_{ M_{i,j} > 0 } = \| M_i \|_0 \quad \quad \quad \textrm{wdeg}(m_i) = \sum_j M_{i,j} = \| M_i \|_1 $$
% In the rest of the section, we will equivalently talk about the degree of nodes in $\mathcal{G}$ and degree of a row of $M$.

\subsection{Equivalence of spikes and degree-varying communities}

We may finally spell out our main theorem linking recommendation embeddings and modeling of communities:

\begin{theorem}
\label{theorem:main} Let $M \in \mathcal{S}^n(\mathbb{R}_+)$ have a SVD $\hat M = U \Sigma V^*$ as defined in section \ref{sec:background_svd}.
Let $E = U \Sigma = (e_1, ... e_n)$ be embeddings that exhibit $K$ distinct spikes within its rows whose directions are given by $\vec s_1, ... \vec s_K$.
If row $e_i$ belongs to a spike, let us define $C(i)$ its corresponding spike index.
We additionally denote $n' \leq n$ the total number of points belonging to spikes. 
% We additionally define $n_k = \sum_i^n \delta_{C(i) = k}$ the cardinality of each spike $k$, and $n' = \sum_k n_k$ the total number of points belonging to spikes. 
For ease of notation, we rearrange $E$ and $V$ such that the $n'$ first rows belong to spikes.
Then, the submatrix $\tilde M = | \hat{M}_{1..n', 1..n'} |$ is proportional to the mean value of a DCBM.

\end{theorem}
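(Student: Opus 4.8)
The plan is to expand the entries of $\tilde M$ directly in spike coordinates and read off a DCBM. First I would apply Lemma~\ref{lemma:svd_diag} to rewrite the truncated SVD as the truncated spectral decomposition, $\hat M = P_{1..f} D_{1..f} P_{1..f}^* = U D_{1..f} U^*$. Assuming $M$ is positive semidefinite -- the convention of section~\ref{sec:background_svd}, where one really works with $MM^*$ -- the retained eigenvalues are nonnegative, so $D_{1..f} = \Sigma$, $V = U$, and $\hat M = U\Sigma U^*$. Assuming further that $\sigma_f > 0$ (otherwise truncate so that $\Sigma$ is invertible), we get $u_i = e_i\Sigma^{-1}$, so each entry becomes a weighted inner product of embeddings: $\hat M_{i,j} = \sum_{\ell} \sigma_\ell^{-1} e_{i,\ell} e_{j,\ell} =: \langle e_i, e_j\rangle_{\Sigma^{-1}}$; since all $\sigma_\ell > 0$, this $\langle\cdot,\cdot\rangle_{\Sigma^{-1}}$ is a genuine inner product on $\mathbb{R}^f$.

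Next I would feed in the spike structure. For $i \le n'$, $e_i = \alpha_i\,\vec s_{C(i)}$ with $\alpha_i \ge 0$ and $\vec s_{C(i)}$ unitary, so bilinearity gives, for $1 \le i,j \le n'$, $\hat M_{i,j} = \alpha_i\alpha_j\,\langle\vec s_{C(i)},\vec s_{C(j)}\rangle_{\Sigma^{-1}} =: \alpha_i\alpha_j\,B'_{C(i),C(j)}$, where $B' \in \mathcal{S}^K(\mathbb{R})$ is the Gram matrix of the spike directions for $\langle\cdot,\cdot\rangle_{\Sigma^{-1}}$. Taking the term-by-term absolute value, $\tilde M_{i,j} = \alpha_i\alpha_j\,|B'_{C(i),C(j)}|$. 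The diagonal of $B'$ is strictly positive, so I can factor $|B'_{k,k'}| = \sqrt{B'_{k,k}B'_{k',k'}}\,\beta_{k,k'}$ with $\beta_{k,k} = 1$, absorb $\sqrt{B'_{k,k}}$ into a node degree $\gamma_i := \alpha_i\sqrt{B'_{C(i),C(i)}}$, and rescale by $\gamma_{\max} := \max_{i\le n'}\gamma_i$ to obtain $\hat\alpha_i := \gamma_i/\gamma_{\max} \in [0,1]$. This leaves $\tilde M_{i,j} = \gamma_{\max}^2\,\hat\alpha_i\hat\alpha_j\,\beta_{C(i),C(j)}$, which is $\gamma_{\max}^2$ times the mean $\mathbb{E}[A_{i,j}] = \hat\alpha_i\hat\alpha_j\,\beta_{C(i),C(j)}$ of a DCBM adjacency matrix on the $n'$ spike nodes, with community map $C$, block matrix $\beta$, and node parameters $\hat\alpha_i$; the diagonal $\tilde M_{i,i} = \gamma_{\max}^2\hat\alpha_i^2$ matches a DCBM permitting self-loops (or is discarded for a hollow adjacency matrix), and by Lemma~\ref{lemma:spikes} nothing here depends on the particular choice $E = U\Sigma$.

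The main obstacle is showing that $\beta$ is an \emph{admissible}, indeed \emph{assortative}, block matrix rather than an arbitrary symmetric one -- this is where the hypotheses really bite. I expect it to come down to Cauchy--Schwarz for $\langle\cdot,\cdot\rangle_{\Sigma^{-1}}$: the bound $|B'_{k,k'}| \le \sqrt{B'_{k,k}B'_{k',k'}}$ yields $\beta_{k,k'} \in [0,1]$, and the inequality is \emph{strict} exactly because distinct spikes have non-collinear directions $\vec s_k \not\parallel \vec s_{k'}$, so $\beta_{k,k'} < 1 = \beta_{k,k}$, which is assortativity. Crucially, this uses that $M$ is positive semidefinite (equivalently, the retained $d_\ell$, and hence the weights $\sigma_\ell^{-1}$, are positive, making $\langle\cdot,\cdot\rangle_{\Sigma^{-1}}$ a true inner product); for a genuinely indefinite $M \in \mathcal{S}^n(\mathbb{R}_+)$ the form is indefinite and assortativity -- or even membership of $\beta$ in $\mathcal{S}^K([0,1])$ -- need not hold, so I would either invoke the PSD convention of section~\ref{sec:background_svd} or add positive semidefiniteness as an explicit hypothesis.
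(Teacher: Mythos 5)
Your proof is correct in substance, and its decisive step is the same as the paper's: expand the entries of $\hat M$ bilinearly in the spike coordinates $e_i = \alpha_i \vec s_{C(i)}$ to obtain the product form $\hat M_{i,j} = \alpha_i \alpha_j B_{C(i),C(j)}$, then rescale the node parameters into $[0,1]$ and read off the mean of a DCBM. Where you diverge is in which factorization of $\hat M$ you expand. The paper writes $\hat M = EV^*$ and invokes Lemma~\ref{lemma:spikes} to get that the rows of $V$ also spike, $v_j = \alpha_j \vec s\,'_{C(j)}$, so that $B_{k,l} = |\langle \vec s_k, \vec s\,'_l\rangle|$ under the standard inner product; this route needs neither positive semidefiniteness nor invertibility of $\Sigma$, because the indefinite case is absorbed into $V = U\,\textrm{sign}(D_{1..f,1..f})$ via Lemma~\ref{lemma:svd_diag}. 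You instead write $\hat M = E\Sigma^{-1}E^*$ and work with the weighted inner product $\langle\cdot,\cdot\rangle_{\Sigma^{-1}}$, which forces you to assume $M$ positive semidefinite and $\sigma_f>0$ --- hypotheses the theorem statement does not impose, though you rightly observe they are consistent with the $MM^*$ convention of section~\ref{sec:background_svd}, and your factorization $\hat M_{i,j}=\alpha_i\alpha_j B'_{C(i),C(j)}$ would survive in the indefinite case by carrying the signs $\textrm{sign}(d_\ell)$ in the bilinear form (only your Cauchy--Schwarz step would be lost). What your route buys is care on a point the paper passes over quickly: the paper asserts $B\in[0,1]^{K\times K}$ without justification, and this is not automatic once $\Sigma$ intervenes between the unitary spike directions of $E$ and those of $V$; your move of absorbing $\sqrt{B'_{k,k}}$ into the node parameters so that $\beta_{k,k}=1$, followed by Cauchy--Schwarz for $\langle\cdot,\cdot\rangle_{\Sigma^{-1}}$, genuinely establishes $\beta\in[0,1]$ and even assortativity $\beta_{k,k'}<\beta_{k,k}$ (modulo the degenerate case of antipodal spike directions) --- neither of which the theorem formally requires, but both of which strengthen the community reading. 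In short: same argument at its core, a narrower hypothesis, and a more rigorous normalization of the block matrix.
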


\begin{proof}
For all $i \in \llbracket 1, n' \rrbracket, e_i = \alpha_i \vec{s}_{C(i)}$. Similarly, $v_i = \alpha_i \vec{s}'_{C(i)}$, thanks to lemma \eqref{lemma:spikes}.
Then, for all $i,j \in \llbracket 1, n' \rrbracket^2$,
$$(\hat M)_{i,j} = (EV^*)_{i,j} = \alpha_i \alpha_j \langle \vec s_{C(i)}, \vec s'_{C(j)} \rangle
\quad \quad \quad
|\hat{M}_{1..n', 1..n'}| = K^2 \cdot \textrm{DCBM}\left( (C(i))_{i=1}^{n'} \, ; \, B \, ; \, \left( \frac{\alpha_i}{K} \right)_{i=1}^{n'} \right)$$
with $K = \max_{i} \alpha_i$, $ B \in [0,1]^{K \times K}$ the matrix $B_{k,l} = | \langle \vec s_k , \vec s'_l  \rangle |$, and $\frac{\alpha_i}{K} \in [0, 1]$ as required.
\end{proof}
In practice, there are good reasons to believe that $\hat M$ will be (roughly) non-negative and that the absolute value can be dropped in $\tilde M$. When $f=1$, the Perron-Frobenius theorem \cite{perron1907theorie} ensures the non-negativity, which is also trivially true when $f \geq \textrm{rank}(M)$. In between, we may expect negative values to principally arise from punctual approximation artifacts of null values of $M$, rather than large negative blocks having $\langle \vec s_k , \vec s'_l  \rangle < 0$.

\begin{figure}[h]
  \centering
  \includegraphics[width=\linewidth]{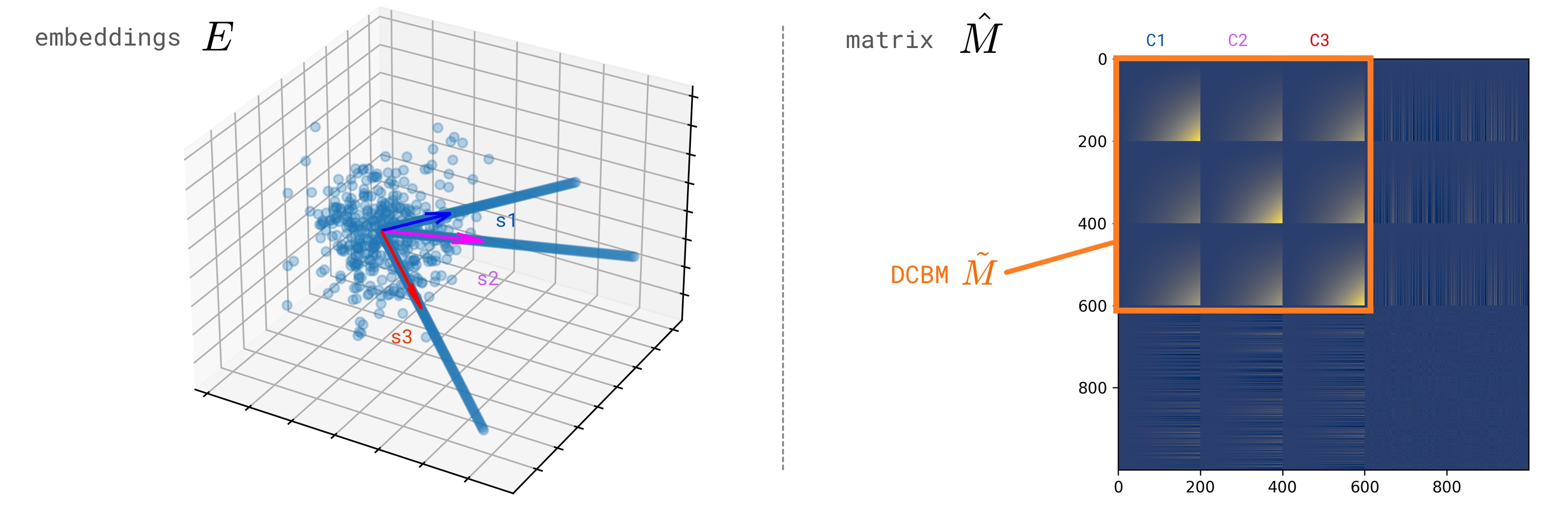}
  \caption{Illustration of theorem \eqref{theorem:main} with a synthetic matrix with $n = 1000$ and $n' = 3 \times 200$ points forming three spikes. We highlight the three direction vectors $\vec{s}_{\{1,2,3\}}$ in $E$ and their corresponding communities $C_{\{1,2,3\}}$ in $\hat M\,$ (reindexed for clarity).
  }
  \vspace{-0.2em}
  \Description{Two subplots}
  
  \label{fig:dcbm}
\end{figure}

\subsubsection*{Interpretation} We have shown that the sub-matrix $\hat{M}_{1..n', 1..n'}$ follows a structure in communities given by a DCBM, as illustrated in Figure \ref{fig:dcbm} with a synthetic example.
If we think geometrically, the DCBM is merely a matrix with constant blocks that is multiplied element-wise by the rank-1 matrix $[\alpha_1, ... \alpha_{n'}]^*[\alpha_1, ... \alpha_{n'}]$.
Our interpretation of this theorem is the following. Since $M$ is often sparse (\textit{c.f.,} section \ref{sec:datasets}), rebuilding as many communities as possible with rank-1 matrices is intuitively more optimal to minimize $\| M - \hat M\|_F^2$ than "spending" many eigenvectors on a single community to fine-tune it with a matrix of higher rank.
The widespread emergence of spike formations we uncovered across multiple datasets would thus stem from this general underlying prioritization of rank-1 approximation of communities. This coincidentally results in a DCBM with convenient theoretical properties that we can now use.
We leave the detection and analysis of eventual rank > 1 approximation out of the scope of this paper.

\subsubsection*{Related work} We found theorem \eqref{theorem:main} to be related to a result from community detection that the diagonalization of DCBM matrices leads to the appearance of spikes \cite{jin2012score, lei2015consistency}. Let us stress the difference. We prove the reciprocate (\textit{i.e.,} spikes $\Rightarrow$ DCBM), which is novel and in a more general case since a portion of $M$ is noisy ($n' < n$). In spirit also, the former literature aim to correct this effect -- seen as noise -- while we instead try to exploit it to gain knowledge on $\hat M$.

\subsection{Practical advice}
\label{sec:advice}

The duality of spike formations with DCBM communities has many convenient properties for recommendation. We highlight some corollaries of our theorem that provide practical advice when using embeddings computed with a SVD.

\subsubsection{Fast degree estimation}
As a first remark, the $(\alpha)_i$ parameters of our theorem's DCBM may be easily retrieved as the norm $\| e_i \|_2$ of embeddings that belong to a spike. Access to the degree of a node of the denoised matrix $\hat M$ without explicitly building a graph may open doors for many benefits. For instance, the topology of a graph may be estimated from its degree distribution (\textit{e.g.,} whether the graph is scale-free), which has been suggested to be indicative of a rather social or information nature of the music graphs \cite{cano2006topology}. Degree centrality has also been related to popularity biases in music data \cite{celma2008hits, south2020popularity} or used to create hierarchies \cite{afchar2023hierarchy}. More generally, the degree of a node is a strong predictor of its robustness to change in a graph \cite{gasteiger2019diffusion, geisler2020reliable}. In section \ref{sec:opening}, we will further exploit such properties of stability.

\subsubsection{Cosine similarity or dot product?}

The literature on community detection often seeks to curb the effect of the degree heterogeneity of the DCBM (\textit{e.g.,} \cite{karrer2011stochastic, qin2013regularized, gao2018dcbm}). One popular method is to normalize the eigenvectors with a spherical clustering \cite{lei2015consistency}. Normalizing on a hypersphere turns out to correspond to the widespread use of cosine similarity in recommendation settings.
It effectively removes the $(\alpha)_i$ terms, turning the DCBM into a simple SBM and thus attributing a unique value to all the items of a given community (the cosine similarity between their community direction $\vec s_i$ and that of the reference embedding to retrieve items from). From the recommendation lens, both unpopular and popular items will be reweighted with an equal relevance value.
This may be useful to improve \textit{discovery} objectives -- \textit{e.g.,} improving coverage and serendipity metrics \cite{ge2010beyond}, but conversely promotes filter bubbles by design \cite{pariser2011filter, zhang2012auralist}. 

On the flip side of the coin, measuring the internal popularity of a node within a community -- \textit{i.e.,} its "importance"  \cite{gao2018dcbm} -- may be helpful to navigate between communities while displaying their most representative items. This setting is, in particular, more suited for artist similarity and cold-start applications \cite{salha2019gravity, salha2021cold}. There, a dot product will always return the item with the highest norm of the community as top-similar, but may also merge popular items from nearby communities in the retrieved items.
The dot product thus seems more suited to propose a \textit{panopticon view} of the many available items, but with the adverse risk of being biased by popularity and thus making long-tail items harder to access.

As always, this choice of hyper-parameter depends on the task at hand and desired criteria to optimize.

\subsubsection{Tuning of $f$}

The \texttt{Spk} metric we proposed in \ref{sec:spikiness} may constitute a promising \textit{offline metric} on optimally choosing the truncation parameter $f$ in the SVD, \textit{i.e.,} finding when recommendation performances are saturated despite adding more embedding dimensions. In the light of theorem \eqref{theorem:main}, we can now reinterpret our metric as an approximation of $K$, where we have implicitly assumed that $\hat n' = \rho n$ points should be considered as being modeled by a DCBM. We have additionally used the empirical heuristic that noise was prevalent in the lower norm values of $E$. When the chosen $\hat n'$ comes close to the ground-truth $n'$, and the hypothesis starts to fail, noise starts to permeate, and the number of spikes grows linearly with the number of points (similar to the baseline Normal distribution where collinear points are rare).

In Table \ref{tab:spikiness}, this effect was striking between \textit{Deezer} and \textit{Spotify}, where the \texttt{Spk} metric was proportional between the two datasets for low values of $f$, then exploded at $f=128$ for the latter. We conclude that $f = 64$ for Spotify and $f=128$ for Deezer constitutes an appropriate choice to capture clean communities in the embeddings. 
This metric makes the assumption to treat non-spikes as uninformative noise, which dismisses potential higher-ranking approximations that may hold further information on the data.
However, we leave this last aspect for future work.

% \subsubsection{What power to chose in $U\Sigma^p$?}

\section{Opening: industrial use-case}
\label{sec:opening}

We open our results with an industrial use case to illustrate the few previously highlighted properties.
Specifically, designing an industrial recommender system involves many challenges (\textit{e.g.,} scalability constraints) that can render adopting a complex machine learning model cumbersome \cite{afchar2022explainability}. Insights like the one we propose can thus be valuable to gain knowledge to better tune and evaluate recommenders leveraging simple item similarity models, such as the SVD, without changing the computation pipeline.
We had access to a history of size twelve of music track embeddings from \textit{Deezer}, spanning between May 2022 and May 2023, computed on user data with a truncated SVD as described in \ref{sec:datasets}. These SVDs contain 1.5 million common music tracks and have $f=128$.

\begin{figure}[h]
  \centering
  \includegraphics[width=\linewidth]{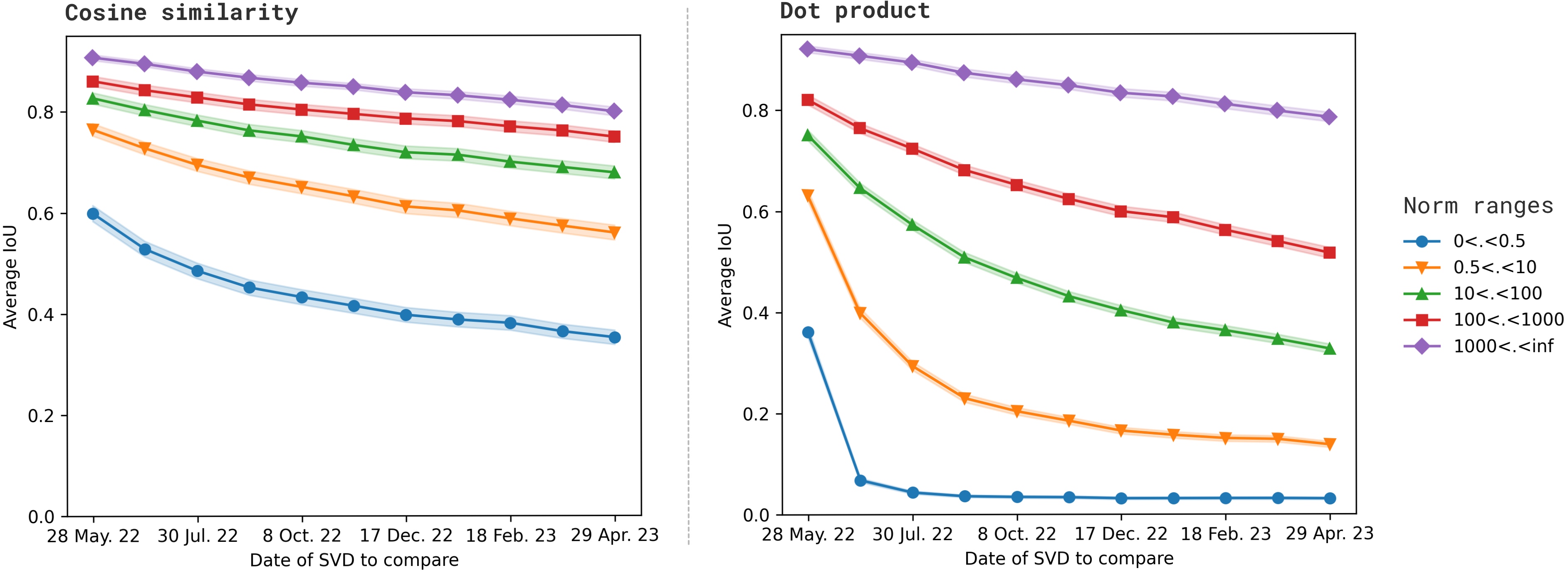}
  \caption{Stability through time of the top-$500$ of \textit{Deezer} embeddings starting in May 2022 and compared with eleven successive embeddings within five-week steps until May 2023. We condition the recommendation on a partition of the embedding norms (indicated on the right). Both the cosine similarity and dot product are considered. 95\% confidence intervals are displayed. }
  \Description{Two plots showing decreasing similarity curves.}
  \vspace{-1em}
  \label{fig:stability}
\end{figure}

Music data is particularly prone to exhibiting community effects -- linked to its many historical and cultural groundings \cite{cross2001music, jacobson2009musically}, but it is also ever-changing. New trends emerge every day (\textit{e.g.,} \textit{lo-fi} music, \textit{phonk}), others vanish (\textit{e.g.,} \textit{emo}, \textit{jumpstyle}), or are
sometimes revivified (\textit{e.g.,} fans of Kate Bush thanks to \textit{Stranger Things'} OST), or shape one of many types of evolution \cite{brodka2013ged}. It is thus natural to wonder how relevant some embeddings computed at a given date will remain in the following months under the change of the underlying data.

In section \ref{sec:advice}, we stressed that the estimated degree is considered a good predictor of the robustness of a node in a graph. We thus propose to link both notions and show that embeddings of higher norms have better stability in time than their lower-norm counterpart. Using the oldest SVD of May 2022 as a reference, we partition the embeddings equally into five according to their norm. We sample $1000$ points in each partition and compute their top-$500$ similar items in each SVDs -- using either the cosine similarity or the dot product -- among the almost 300.000 track candidates in each partition. We finally compute the mean Jaccard index (\texttt{IoU}) \cite{greene2010tracking} to quantify the relative similarity between the top items of the reference and subsequent compared date.

The results are given in Figure \ref{fig:stability}. We can see that conditioning our results on several norm ranges reveals strikingly different dynamics that could not have been inferred solely from a mean analysis. The dot product notably displays the most disparate curves where low-norm embeddings quickly become irrelevant while higher-norm ones remain over 80\% similar after almost a year.
Note that our point is not to state that change is necessarily undesirable in the embeddings, though we could expect well-established music genres to remain stable (\textit{e.g.,} jazz music tracks).
Instead, we envision this type of conditioned analysis as a means to better evaluate music recommenders by discriminating between various dynamics of embeddings -- and thus some expected associated properties.

%Finally, the popularity effect of the dot product we discussed can be witnessed as .... actually I should add another graph but I don't have the room for it. 

%eigenvectors of SVD are unstable [ref cool], but their effectively computed similar items not necessarily

% \subsubsection*{Diversity}

\section{Conclusion}

Embeddings are commonly used in music recommendation. Yet, there are mostly so in a black-box manner, and many practices rely on empirical knowledge.
In this paper, we have proven that the widespread emergence of spikes in the embeddings of music items was tied to an underlying graph structure of degree-varying communities.
Our insights are particularly relevant for music data, where such structure is found naturally due to cultural and historical groundings.
Reframing a community detection model through the recommendation lens enabled us to formulate novel practical insights for music embeddings (\textit{e.g.,} cosine or dot product, stability) that are better grounded in theory.
We hope to see more work bridging the gap between graph theory and recommendation.
Our future work includes the analysis of rank~>~1 approximation effects on embeddings.

\clearpage
\bibliographystyle{ACM-Reference-Format}
\bibliography{main}

%%% -*-BibTeX-*-
%%% Do NOT edit. File created by BibTeX with style
%%% ACM-Reference-Format-Journals [18-Jan-2012].

\begin{thebibliography}{48}

%%% ====================================================================
%%% NOTE TO THE USER: you can override these defaults by providing
%%% customized versions of any of these macros before the \bibliography
%%% command.  Each of them MUST provide its own final punctuation,
%%% except for \shownote{}, \showDOI{}, and \showURL{}.  The latter two
%%% do not use final punctuation, in order to avoid confusing it with
%%% the Web address.
%%%
%%% To suppress output of a particular field, define its macro to expand
%%% to an empty string, or better, \unskip, like this:
%%%
%%% \newcommand{\showDOI}[1]{\unskip}   % LaTeX syntax
%%%
%%% \def \showDOI #1{\unskip}           % plain TeX syntax
%%%
%%% ====================================================================

\ifx \showCODEN    \undefined \def \showCODEN     #1{\unskip}     \fi
\ifx \showDOI      \undefined \def \showDOI       #1{#1}\fi
\ifx \showISBNx    \undefined \def \showISBNx     #1{\unskip}     \fi
\ifx \showISBNxiii \undefined \def \showISBNxiii  #1{\unskip}     \fi
\ifx \showISSN     \undefined \def \showISSN      #1{\unskip}     \fi
\ifx \showLCCN     \undefined \def \showLCCN      #1{\unskip}     \fi
\ifx \shownote     \undefined \def \shownote      #1{#1}          \fi
\ifx \showarticletitle \undefined \def \showarticletitle #1{#1}   \fi
\ifx \showURL      \undefined \def \showURL       {\relax}        \fi
% The following commands are used for tagged output and should be
% invisible to TeX
\providecommand\bibfield[2]{#2}
\providecommand\bibinfo[2]{#2}
\providecommand\natexlab[1]{#1}
\providecommand\showeprint[2][]{arXiv:#2}

\bibitem[Afchar et~al\mbox{.}(2022a)]%
        {afchar2023hierarchy}
\bibfield{author}{\bibinfo{person}{Darius Afchar}, \bibinfo{person}{Romain
  Hennequin}, {and} \bibinfo{person}{Vincent Guigue}.}
  \bibinfo{year}{2022}\natexlab{a}.
\newblock \showarticletitle{Learning Unsupervised Hierarchies of Audio
  Concepts}. In \bibinfo{booktitle}{\emph{23rd International Society for Music
  Information Retrieval Conference (ISMIR)}}.
\newblock


\bibitem[Afchar et~al\mbox{.}(2022b)]%
        {afchar2022explainability}
\bibfield{author}{\bibinfo{person}{Darius Afchar}, \bibinfo{person}{Alessandro
  Melchiorre}, \bibinfo{person}{Markus Schedl}, \bibinfo{person}{Romain
  Hennequin}, \bibinfo{person}{Elena Epure}, {and} \bibinfo{person}{Manuel
  Moussallam}.} \bibinfo{year}{2022}\natexlab{b}.
\newblock \showarticletitle{Explainability in music recommender systems}.
\newblock \bibinfo{journal}{\emph{AI Magazine}} \bibinfo{volume}{43},
  \bibinfo{number}{2} (\bibinfo{year}{2022}), \bibinfo{pages}{190--208}.
\newblock


\bibitem[Briand et~al\mbox{.}(2021)]%
        {briand2021semi}
\bibfield{author}{\bibinfo{person}{L{\'e}a Briand}, \bibinfo{person}{Guillaume
  Salha-Galvan}, \bibinfo{person}{Walid Bendada}, \bibinfo{person}{Mathieu
  Morlon}, {and} \bibinfo{person}{Viet-Anh Tran}.}
  \bibinfo{year}{2021}\natexlab{}.
\newblock \showarticletitle{A semi-personalized system for user cold start
  recommendation on music streaming apps}. In \bibinfo{booktitle}{\emph{27th
  ACM SIGKDD Conference on Knowledge Discovery \& Data Mining}}.
  \bibinfo{pages}{2601--2609}.
\newblock


\bibitem[Br{\'o}dka et~al\mbox{.}(2013)]%
        {brodka2013ged}
\bibfield{author}{\bibinfo{person}{Piotr Br{\'o}dka},
  \bibinfo{person}{Stanis{\l}aw Saganowski}, {and}
  \bibinfo{person}{Przemys{\l}aw Kazienko}.} \bibinfo{year}{2013}\natexlab{}.
\newblock \showarticletitle{GED: the method for group evolution discovery in
  social networks}.
\newblock \bibinfo{journal}{\emph{Social Network Analysis and Mining}}
  \bibinfo{volume}{3} (\bibinfo{year}{2013}), \bibinfo{pages}{1--14}.
\newblock


\bibitem[Cano et~al\mbox{.}(2006)]%
        {cano2006topology}
\bibfield{author}{\bibinfo{person}{Pedro Cano}, \bibinfo{person}{{\`O}scar
  Celma}, \bibinfo{person}{Markus Koppenberger}, {and}
  \bibinfo{person}{Javier~M. Buldú}.} \bibinfo{year}{2006}\natexlab{}.
\newblock \showarticletitle{{Topology of music recommendation networks}}.
\newblock \bibinfo{journal}{\emph{Chaos: An Interdisciplinary Journal of
  Nonlinear Science}} \bibinfo{volume}{16}, \bibinfo{number}{1}
  (\bibinfo{date}{01} \bibinfo{year}{2006}).
\newblock
\urldef\tempurl%
\url{https://doi.org/10.1063/1.2137622}
\showDOI{\tempurl}


\bibitem[Caselles-Dupr{\'e} et~al\mbox{.}(2018)]%
        {caselles2018word2vec}
\bibfield{author}{\bibinfo{person}{Hugo Caselles-Dupr{\'e}},
  \bibinfo{person}{Florian Lesaint}, {and} \bibinfo{person}{Jimena
  Royo-Letelier}.} \bibinfo{year}{2018}\natexlab{}.
\newblock \showarticletitle{Word2vec applied to recommendation: Hyperparameters
  matter}. In \bibinfo{booktitle}{\emph{12th ACM Conference on Recommender
  Systems}} \emph{(\bibinfo{series}{RecSys})}. \bibinfo{pages}{352--356}.
\newblock


\bibitem[Celma(2010)]%
        {lastfmDataset}
\bibfield{author}{\bibinfo{person}{{\`O}scar Celma}.}
  \bibinfo{year}{2010}\natexlab{}.
\newblock \bibinfo{booktitle}{\emph{{Music Recommendation and Discovery in the
  Long Tail}}}.
\newblock \bibinfo{publisher}{Springer}.
\newblock
\urldef\tempurl%
\url{http://ocelma.net/MusicRecommendationDataset/lastfm-360K.html}
\showURL{%
\tempurl}


\bibitem[Celma and Cano(2008)]%
        {celma2008hits}
\bibfield{author}{\bibinfo{person}{{\`O}scar Celma} {and}
  \bibinfo{person}{Pedro Cano}.} \bibinfo{year}{2008}\natexlab{}.
\newblock \showarticletitle{From hits to niches? or how popular artists can
  bias music recommendation and discovery}. In
  \bibinfo{booktitle}{\emph{Proceedings of the 2nd KDD Workshop on Large-Scale
  Recommender Systems and the Netflix Prize Competition}}.
  \bibinfo{pages}{1--8}.
\newblock


\bibitem[Chen et~al\mbox{.}(2012)]%
        {cornellDataset}
\bibfield{author}{\bibinfo{person}{Shuo Chen}, \bibinfo{person}{Josh~L Moore},
  \bibinfo{person}{Douglas Turnbull}, {and} \bibinfo{person}{Thorsten
  Joachims}.} \bibinfo{year}{2012}\natexlab{}.
\newblock \showarticletitle{Playlist prediction via metric embedding}. In
  \bibinfo{booktitle}{\emph{18th ACM SIGKDD international conference on
  Knowledge discovery and data mining (KDD)}}. \bibinfo{pages}{714--722}.
\newblock
\urldef\tempurl%
\url{https://www.cs.cornell.edu/~shuochen/lme/data_page.html}
\showURL{%
\tempurl}


\bibitem[Cheng et~al\mbox{.}(2017)]%
        {cheng2017exploiting}
\bibfield{author}{\bibinfo{person}{Zhiyong Cheng}, \bibinfo{person}{Jialie
  Shen}, \bibinfo{person}{Lei Zhu}, \bibinfo{person}{Mohan~S Kankanhalli},
  {and} \bibinfo{person}{Liqiang Nie}.} \bibinfo{year}{2017}\natexlab{}.
\newblock \showarticletitle{Exploiting Music Play Sequence for Music
  Recommendation}. In \bibinfo{booktitle}{\emph{IJCAI}},
  Vol.~\bibinfo{volume}{17}. \bibinfo{pages}{3654--3660}.
\newblock


\bibitem[Constantine and Gleich(2011)]%
        {constantine2011tall}
\bibfield{author}{\bibinfo{person}{Paul~G Constantine} {and}
  \bibinfo{person}{David~F Gleich}.} \bibinfo{year}{2011}\natexlab{}.
\newblock \showarticletitle{Tall and skinny QR factorizations in MapReduce
  architectures}. In \bibinfo{booktitle}{\emph{2nd international workshop on
  MapReduce and its applications}}. \bibinfo{pages}{43--50}.
\newblock


\bibitem[Cremonesi et~al\mbox{.}(2010)]%
        {cremonesi2010topn}
\bibfield{author}{\bibinfo{person}{Paolo Cremonesi}, \bibinfo{person}{Yehuda
  Koren}, {and} \bibinfo{person}{Roberto Turrin}.}
  \bibinfo{year}{2010}\natexlab{}.
\newblock \showarticletitle{Performance of Recommender Algorithms on Top-n
  Recommendation Tasks}. In \bibinfo{booktitle}{\emph{4th ACM Conference on
  Recommender Systems (RecSys)}}. \bibinfo{pages}{39–46}.
\newblock
\showISBNx{9781605589060}
\urldef\tempurl%
\url{https://doi.org/10.1145/1864708.1864721}
\showDOI{\tempurl}


\bibitem[Cross(2001)]%
        {cross2001music}
\bibfield{author}{\bibinfo{person}{Ian Cross}.}
  \bibinfo{year}{2001}\natexlab{}.
\newblock \showarticletitle{Music, cognition, culture, and evolution}.
\newblock \bibinfo{journal}{\emph{Annals of the New York Academy of sciences}}
  \bibinfo{volume}{930}, \bibinfo{number}{1} (\bibinfo{year}{2001}),
  \bibinfo{pages}{28--42}.
\newblock


\bibitem[Eckart and Young(1936)]%
        {Eckart1936}
\bibfield{author}{\bibinfo{person}{Carl Eckart} {and} \bibinfo{person}{Gale
  Young}.} \bibinfo{year}{1936}\natexlab{}.
\newblock \showarticletitle{The approximation of one matrix by another of lower
  rank}.
\newblock \bibinfo{journal}{\emph{Psychometrika}} \bibinfo{volume}{1},
  \bibinfo{number}{3} (\bibinfo{date}{01 Sep} \bibinfo{year}{1936}),
  \bibinfo{pages}{211--218}.
\newblock
\showISSN{1860-0980}
\urldef\tempurl%
\url{https://doi.org/10.1007/BF02288367}
\showDOI{\tempurl}


\bibitem[Ferrari~Dacrema et~al\mbox{.}(2021)]%
        {ferrari2021troubling}
\bibfield{author}{\bibinfo{person}{Maurizio Ferrari~Dacrema},
  \bibinfo{person}{Simone Boglio}, \bibinfo{person}{Paolo Cremonesi}, {and}
  \bibinfo{person}{Dietmar Jannach}.} \bibinfo{year}{2021}\natexlab{}.
\newblock \showarticletitle{A troubling analysis of reproducibility and
  progress in recommender systems research}.
\newblock \bibinfo{journal}{\emph{ACM Transactions on Information Systems
  (TOIS)}} \bibinfo{volume}{39}, \bibinfo{number}{2} (\bibinfo{year}{2021}),
  \bibinfo{pages}{1--49}.
\newblock


\bibitem[Ferrari~Dacrema et~al\mbox{.}(2019)]%
        {ferrari2019we}
\bibfield{author}{\bibinfo{person}{Maurizio Ferrari~Dacrema},
  \bibinfo{person}{Paolo Cremonesi}, {and} \bibinfo{person}{Dietmar Jannach}.}
  \bibinfo{year}{2019}\natexlab{}.
\newblock \showarticletitle{Are we really making much progress? A worrying
  analysis of recent neural recommendation approaches}. In
  \bibinfo{booktitle}{\emph{13th ACM conference on recommender systems
  (RecSys)}}. \bibinfo{pages}{101--109}.
\newblock


\bibitem[Fortunato(2010)]%
        {fortunato2010community}
\bibfield{author}{\bibinfo{person}{Santo Fortunato}.}
  \bibinfo{year}{2010}\natexlab{}.
\newblock \showarticletitle{Community detection in graphs}.
\newblock \bibinfo{journal}{\emph{Physics reports}} \bibinfo{volume}{486},
  \bibinfo{number}{3-5} (\bibinfo{year}{2010}), \bibinfo{pages}{75--174}.
\newblock


\bibitem[Gao et~al\mbox{.}(2018)]%
        {gao2018dcbm}
\bibfield{author}{\bibinfo{person}{Chao Gao}, \bibinfo{person}{Zongming Ma},
  \bibinfo{person}{Anderson~Y. Zhang}, {and} \bibinfo{person}{Harrison~H.
  Zhou}.} \bibinfo{year}{2018}\natexlab{}.
\newblock \showarticletitle{Community Detection in Degree-Corrected Block
  Models}.
\newblock \bibinfo{journal}{\emph{The Annals of Statistics}}
  \bibinfo{volume}{46}, \bibinfo{number}{5} (\bibinfo{year}{2018}),
  \bibinfo{pages}{2153--2185}.
\newblock
\showISSN{00905364, 21688966}
\urldef\tempurl%
\url{https://www.jstor.org/stable/26542860}
\showURL{%
\tempurl}


\bibitem[Gasteiger et~al\mbox{.}(2019)]%
        {gasteiger2019diffusion}
\bibfield{author}{\bibinfo{person}{Johannes Gasteiger}, \bibinfo{person}{Stefan
  Wei{\ss}enberger}, {and} \bibinfo{person}{Stephan G{\"u}nnemann}.}
  \bibinfo{year}{2019}\natexlab{}.
\newblock \showarticletitle{Diffusion improves graph learning}.
\newblock \bibinfo{journal}{\emph{Advances in neural information processing
  systems (NeurIPS)}}  \bibinfo{volume}{32} (\bibinfo{year}{2019}).
\newblock


\bibitem[Ge et~al\mbox{.}(2010)]%
        {ge2010beyond}
\bibfield{author}{\bibinfo{person}{Mouzhi Ge}, \bibinfo{person}{Carla
  Delgado-Battenfeld}, {and} \bibinfo{person}{Dietmar Jannach}.}
  \bibinfo{year}{2010}\natexlab{}.
\newblock \showarticletitle{Beyond accuracy: evaluating recommender systems by
  coverage and serendipity}. In \bibinfo{booktitle}{\emph{4th ACM conference on
  Recommender systems (RecSys)}}. \bibinfo{pages}{257--260}.
\newblock


\bibitem[Geisler et~al\mbox{.}(2020)]%
        {geisler2020reliable}
\bibfield{author}{\bibinfo{person}{Simon Geisler}, \bibinfo{person}{Daniel
  Z{\"u}gner}, {and} \bibinfo{person}{Stephan G{\"u}nnemann}.}
  \bibinfo{year}{2020}\natexlab{}.
\newblock \showarticletitle{Reliable graph neural networks via robust
  aggregation}.
\newblock \bibinfo{journal}{\emph{Advances in Neural Information Processing
  Systems (NeurIPS)}}  \bibinfo{volume}{33} (\bibinfo{year}{2020}),
  \bibinfo{pages}{13272--13284}.
\newblock


\bibitem[Greene et~al\mbox{.}(2010)]%
        {greene2010tracking}
\bibfield{author}{\bibinfo{person}{Derek Greene}, \bibinfo{person}{Donal
  Doyle}, {and} \bibinfo{person}{Padraig Cunningham}.}
  \bibinfo{year}{2010}\natexlab{}.
\newblock \showarticletitle{Tracking the evolution of communities in dynamic
  social networks}. In \bibinfo{booktitle}{\emph{2010 international conference
  on advances in social networks analysis and mining}}. IEEE,
  \bibinfo{pages}{176--183}.
\newblock


\bibitem[Halko et~al\mbox{.}(2011)]%
        {halko2011finding}
\bibfield{author}{\bibinfo{person}{Nathan Halko}, \bibinfo{person}{Per-Gunnar
  Martinsson}, {and} \bibinfo{person}{Joel~A Tropp}.}
  \bibinfo{year}{2011}\natexlab{}.
\newblock \showarticletitle{Finding structure with randomness: Probabilistic
  algorithms for constructing approximate matrix decompositions}.
\newblock \bibinfo{journal}{\emph{SIAM review}} \bibinfo{volume}{53},
  \bibinfo{number}{2} (\bibinfo{year}{2011}), \bibinfo{pages}{217--288}.
\newblock


\bibitem[Harper and Konstan(2015)]%
        {movielensDataset}
\bibfield{author}{\bibinfo{person}{F~Maxwell Harper} {and}
  \bibinfo{person}{Joseph~A Konstan}.} \bibinfo{year}{2015}\natexlab{}.
\newblock \showarticletitle{The movielens datasets: History and context}.
\newblock \bibinfo{journal}{\emph{Acm transactions on interactive intelligent
  systems (tiis)}} \bibinfo{volume}{5}, \bibinfo{number}{4}
  (\bibinfo{year}{2015}), \bibinfo{pages}{1--19}.
\newblock
\urldef\tempurl%
\url{https://grouplens.org/datasets/movielens/25m/}
\showURL{%
\tempurl}


\bibitem[Holland et~al\mbox{.}(1983)]%
        {holland1983stochastic}
\bibfield{author}{\bibinfo{person}{Paul~W Holland},
  \bibinfo{person}{Kathryn~Blackmond Laskey}, {and} \bibinfo{person}{Samuel
  Leinhardt}.} \bibinfo{year}{1983}\natexlab{}.
\newblock \showarticletitle{Stochastic blockmodels: First steps}.
\newblock \bibinfo{journal}{\emph{Social networks}} \bibinfo{volume}{5},
  \bibinfo{number}{2} (\bibinfo{year}{1983}), \bibinfo{pages}{109--137}.
\newblock


\bibitem[Jacobson and Sandler(2009)]%
        {jacobson2009musically}
\bibfield{author}{\bibinfo{person}{Kurt Jacobson} {and} \bibinfo{person}{Mark
  Sandler}.} \bibinfo{year}{2009}\natexlab{}.
\newblock \showarticletitle{Musically meaningful or just noise? an analysis of
  on-line artist networks}. In \bibinfo{booktitle}{\emph{Computer Music
  Modeling and Retrieval. Genesis of Meaning in Sound and Music: 5th
  International Symposium}}. Springer, \bibinfo{pages}{107--118}.
\newblock


\bibitem[Jin(2012)]%
        {jin2012score}
\bibfield{author}{\bibinfo{person}{Jiashun Jin}.}
  \bibinfo{year}{2012}\natexlab{}.
\newblock \showarticletitle{Fast network community detection by SCORE}.
\newblock \bibinfo{journal}{\emph{The Annals of Statistics}}
  \bibinfo{volume}{43} (\bibinfo{date}{11} \bibinfo{year}{2012}).
\newblock
\urldef\tempurl%
\url{https://doi.org/10.1214/14-AOS1265}
\showDOI{\tempurl}


\bibitem[Karrer and Newman(2011)]%
        {karrer2011stochastic}
\bibfield{author}{\bibinfo{person}{Brian Karrer} {and} \bibinfo{person}{Mark~EJ
  Newman}.} \bibinfo{year}{2011}\natexlab{}.
\newblock \showarticletitle{Stochastic blockmodels and community structure in
  networks}.
\newblock \bibinfo{journal}{\emph{Physical review E}} \bibinfo{volume}{83},
  \bibinfo{number}{1} (\bibinfo{year}{2011}), \bibinfo{pages}{016107}.
\newblock


\bibitem[Lehmann(2014)]%
        {Lehmann2014}
\bibfield{author}{\bibinfo{person}{Sune Lehmann}.}
  \bibinfo{year}{2014}\natexlab{}.
\newblock \bibinfo{booktitle}{\emph{Community Detection, Current and Future
  Research Trends}}.
\newblock \bibinfo{publisher}{Springer New York}, \bibinfo{address}{New York,
  NY}, \bibinfo{pages}{214--220}.
\newblock
\showISBNx{978-1-4614-6170-8}
\urldef\tempurl%
\url{https://doi.org/10.1007/978-1-4614-6170-8_27}
\showDOI{\tempurl}


\bibitem[Lei and Rinaldo(2015)]%
        {lei2015consistency}
\bibfield{author}{\bibinfo{person}{Jing Lei} {and} \bibinfo{person}{Alessandro
  Rinaldo}.} \bibinfo{year}{2015}\natexlab{}.
\newblock \showarticletitle{Consistency of spectral clustering in stochastic
  block models}.
\newblock \bibinfo{journal}{\emph{The Annals of Statistics}}
  \bibinfo{volume}{43}, \bibinfo{number}{1} (\bibinfo{year}{2015}),
  \bibinfo{pages}{215--237}.
\newblock
\showISSN{00905364}


\bibitem[Levy and Goldberg(2014)]%
        {levy2014neural}
\bibfield{author}{\bibinfo{person}{Omer Levy} {and} \bibinfo{person}{Yoav
  Goldberg}.} \bibinfo{year}{2014}\natexlab{}.
\newblock \showarticletitle{Neural word embedding as implicit matrix
  factorization}. In \bibinfo{booktitle}{\emph{Advances in neural information
  processing systems (NIPS)}}, Vol.~\bibinfo{volume}{27}.
\newblock


\bibitem[Levy et~al\mbox{.}(2015)]%
        {levy2015improving}
\bibfield{author}{\bibinfo{person}{Omer Levy}, \bibinfo{person}{Yoav Goldberg},
  {and} \bibinfo{person}{Ido Dagan}.} \bibinfo{year}{2015}\natexlab{}.
\newblock \showarticletitle{Improving distributional similarity with lessons
  learned from word embeddings}.
\newblock \bibinfo{journal}{\emph{Transactions of the association for
  computational linguistics (ACL)}}  \bibinfo{volume}{3}
  (\bibinfo{year}{2015}), \bibinfo{pages}{211--225}.
\newblock


\bibitem[McFee and Lanckriet(2012)]%
        {aotmDataset}
\bibfield{author}{\bibinfo{person}{B. McFee} {and} \bibinfo{person}{{G. R. G.}
  Lanckriet}.} \bibinfo{year}{2012}\natexlab{}.
\newblock \showarticletitle{Hypergraph models of playlist dialects}. In
  \bibinfo{booktitle}{\emph{13th International Symposium for Music Information
  Retrieval (ISMIR)}}.
\newblock
\urldef\tempurl%
\url{https://brianmcfee.net/data/aotm2011.html}
\showURL{%
\tempurl}


\bibitem[Pariser(2011)]%
        {pariser2011filter}
\bibfield{author}{\bibinfo{person}{Eli Pariser}.}
  \bibinfo{year}{2011}\natexlab{}.
\newblock \bibinfo{booktitle}{\emph{The filter bubble: What the Internet is
  hiding from you}}.
\newblock \bibinfo{publisher}{penguin UK}.
\newblock


\bibitem[Perron(1907)]%
        {perron1907theorie}
\bibfield{author}{\bibinfo{person}{Oskar Perron}.}
  \bibinfo{year}{1907}\natexlab{}.
\newblock \showarticletitle{Zur theorie der matrices}.
\newblock \bibinfo{journal}{\emph{Math. Ann.}} \bibinfo{volume}{64},
  \bibinfo{number}{2} (\bibinfo{year}{1907}), \bibinfo{pages}{248--263}.
\newblock


\bibitem[Qin and Rohe(2013)]%
        {qin2013regularized}
\bibfield{author}{\bibinfo{person}{Tai Qin} {and} \bibinfo{person}{Karl Rohe}.}
  \bibinfo{year}{2013}\natexlab{}.
\newblock \showarticletitle{Regularized spectral clustering under the
  degree-corrected stochastic blockmodel}.
\newblock \bibinfo{journal}{\emph{Advances in neural information processing
  systems (NIPS)}}  \bibinfo{volume}{26} (\bibinfo{year}{2013}).
\newblock


\bibitem[Radicchi et~al\mbox{.}(2004)]%
        {radicchi2004defining}
\bibfield{author}{\bibinfo{person}{Filippo Radicchi}, \bibinfo{person}{Claudio
  Castellano}, \bibinfo{person}{Federico Cecconi}, \bibinfo{person}{Vittorio
  Loreto}, {and} \bibinfo{person}{Domenico Parisi}.}
  \bibinfo{year}{2004}\natexlab{}.
\newblock \showarticletitle{Defining and identifying communities in networks}.
\newblock \bibinfo{journal}{\emph{Proceedings of the national academy of
  sciences}} \bibinfo{volume}{101}, \bibinfo{number}{9} (\bibinfo{year}{2004}),
  \bibinfo{pages}{2658--2663}.
\newblock


\bibitem[Rendle et~al\mbox{.}(2009)]%
        {rendle2009bpr}
\bibfield{author}{\bibinfo{person}{Steffen Rendle}, \bibinfo{person}{Christoph
  Freudenthaler}, \bibinfo{person}{Zeno Gantner}, {and} \bibinfo{person}{Lars
  Schmidt-Thieme}.} \bibinfo{year}{2009}\natexlab{}.
\newblock \showarticletitle{BPR: Bayesian Personalized Ranking from Implicit
  Feedback}. In \bibinfo{booktitle}{\emph{25th Conference on Uncertainty in
  Artificial Intelligence}} \emph{(\bibinfo{series}{UAI '09})}.
  \bibinfo{pages}{452–461}.
\newblock
\showISBNx{9780974903958}


\bibitem[Rendle et~al\mbox{.}(2020)]%
        {rendle2020neural}
\bibfield{author}{\bibinfo{person}{Steffen Rendle}, \bibinfo{person}{Walid
  Krichene}, \bibinfo{person}{Li Zhang}, {and} \bibinfo{person}{John
  Anderson}.} \bibinfo{year}{2020}\natexlab{}.
\newblock \showarticletitle{Neural collaborative filtering vs. matrix
  factorization revisited}. In \bibinfo{booktitle}{\emph{14th ACM Conference on
  Recommender Systems (RecSys)}}. \bibinfo{pages}{240--248}.
\newblock


\bibitem[Rendle et~al\mbox{.}(2019)]%
        {rendle2019baseline}
\bibfield{author}{\bibinfo{person}{Steffen Rendle}, \bibinfo{person}{Li Zhang},
  {and} \bibinfo{person}{Yehuda Koren}.} \bibinfo{year}{2019}\natexlab{}.
\newblock \showarticletitle{On the Difficulty of Evaluating Baselines: {A}
  Study on Recommender Systems}.
\newblock \bibinfo{journal}{\emph{CoRR}}  \bibinfo{volume}{abs/1905.01395}
  (\bibinfo{year}{2019}).
\newblock
\showeprint{1905.01395}


\bibitem[Salha et~al\mbox{.}(2019)]%
        {salha2019gravity}
\bibfield{author}{\bibinfo{person}{Guillaume Salha}, \bibinfo{person}{Stratis
  Limnios}, \bibinfo{person}{Romain Hennequin}, \bibinfo{person}{Viet-Anh
  Tran}, {and} \bibinfo{person}{Michalis Vazirgiannis}.}
  \bibinfo{year}{2019}\natexlab{}.
\newblock \showarticletitle{Gravity-inspired graph autoencoders for directed
  link prediction}. In \bibinfo{booktitle}{\emph{28th ACM international
  conference on information and knowledge management (CIKM)}}.
  \bibinfo{pages}{589--598}.
\newblock


\bibitem[Salha-Galvan et~al\mbox{.}(2021)]%
        {salha2021cold}
\bibfield{author}{\bibinfo{person}{Guillaume Salha-Galvan},
  \bibinfo{person}{Romain Hennequin}, \bibinfo{person}{Benjamin Chapus},
  \bibinfo{person}{Viet-Anh Tran}, {and} \bibinfo{person}{Michalis
  Vazirgiannis}.} \bibinfo{year}{2021}\natexlab{}.
\newblock \showarticletitle{Cold start similar artists ranking with
  gravity-inspired graph autoencoders}. In \bibinfo{booktitle}{\emph{15th ACM
  Conference on Recommender Systems (RecSys)}}. \bibinfo{pages}{443--452}.
\newblock


\bibitem[Schedl et~al\mbox{.}(2018)]%
        {schedl2018current}
\bibfield{author}{\bibinfo{person}{Markus Schedl}, \bibinfo{person}{Hamed
  Zamani}, \bibinfo{person}{Ching-Wei Chen}, \bibinfo{person}{Yashar Deldjoo},
  {and} \bibinfo{person}{Mehdi Elahi}.} \bibinfo{year}{2018}\natexlab{}.
\newblock \showarticletitle{Current challenges and visions in music recommender
  systems research}.
\newblock \bibinfo{journal}{\emph{International Journal of Multimedia
  Information Retrieval}}  \bibinfo{volume}{7} (\bibinfo{year}{2018}),
  \bibinfo{pages}{95--116}.
\newblock


\bibitem[South et~al\mbox{.}(2020)]%
        {south2020popularity}
\bibfield{author}{\bibinfo{person}{Tobin South}, \bibinfo{person}{Matthew
  Roughan}, {and} \bibinfo{person}{Lewis Mitchell}.}
  \bibinfo{year}{2020}\natexlab{}.
\newblock \showarticletitle{Popularity and centrality in Spotify networks:
  critical transitions in eigenvector centrality}.
\newblock \bibinfo{journal}{\emph{Journal of Complex Networks}}
  \bibinfo{volume}{8}, \bibinfo{number}{6} (\bibinfo{year}{2020}),
  \bibinfo{pages}{cnaa050}.
\newblock


\bibitem[Spotify(2020)]%
        {spotifyDataset}
\bibfield{author}{\bibinfo{person}{Spotify}.} \bibinfo{year}{2020}\natexlab{}.
\newblock \bibinfo{booktitle}{\emph{Spotify Million Playlist}}.
\newblock
\urldef\tempurl%
\url{https://research.atspotify.com/2020/09/the-million-playlist-dataset-remastered/}
\showURL{%
\tempurl}


\bibitem[Stewart(1998)]%
        {stewart1998perturbation}
\bibfield{author}{\bibinfo{person}{Gilbert~W Stewart}.}
  \bibinfo{year}{1998}\natexlab{}.
\newblock \bibinfo{booktitle}{\emph{Perturbation theory for the singular value
  decomposition}}.
\newblock \bibinfo{type}{{T}echnical {R}eport}.
\newblock


\bibitem[Wang et~al\mbox{.}(2022)]%
        {wang2022towards}
\bibfield{author}{\bibinfo{person}{Luyu Wang}, \bibinfo{person}{Pauline Luc},
  \bibinfo{person}{Yan Wu}, \bibinfo{person}{Adria Recasens},
  \bibinfo{person}{Lucas Smaira}, \bibinfo{person}{Andrew Brock},
  \bibinfo{person}{Andrew Jaegle}, \bibinfo{person}{Jean-Baptiste Alayrac},
  \bibinfo{person}{Sander Dieleman}, \bibinfo{person}{Joao Carreira},
  {et~al\mbox{.}}} \bibinfo{year}{2022}\natexlab{}.
\newblock \showarticletitle{Towards learning universal audio representations}.
  In \bibinfo{booktitle}{\emph{IEEE International Conference on Acoustics,
  Speech and Signal Processing (ICASSP)}}. \bibinfo{pages}{4593--4597}.
\newblock


\bibitem[Zhang et~al\mbox{.}(2012)]%
        {zhang2012auralist}
\bibfield{author}{\bibinfo{person}{Yuan~Cao Zhang},
  \bibinfo{person}{Diarmuid~{\'O} S{\'e}aghdha}, \bibinfo{person}{Daniele
  Quercia}, {and} \bibinfo{person}{Tamas Jambor}.}
  \bibinfo{year}{2012}\natexlab{}.
\newblock \showarticletitle{Auralist: introducing serendipity into music
  recommendation}. In \bibinfo{booktitle}{\emph{5th ACM international
  conference on Web search and data mining (WSDM)}}. \bibinfo{pages}{13--22}.
\newblock


\end{thebibliography}

\end{document}